\newtheorem{remark}{Remark}
\newtheorem{assumption}{Assumption}
\newtheorem{thm}{Theorem}
\newcommand{\nc}{\mathrm}
\newcommand{\n}{\mathbf}
\newcommand{\cm}{\mathcal}
\title{\LARGE \bf
Cooperative nonlinear distributed model predictive control\\ with dissimilar control horizons}
 \author{Paula Chanfreut, José M. Maestre, Quanyan Zhu, W.P.M.H. (Maurice) Heemels
\thanks{*This work is supported by the
ERC Advanced Grant OCONTSOLAR (SI-1838/24/2018), and by the {Spanish} projects C3PO-R2D2 and C3PO-R3 under grants PID2020-119476RB-I00 and {PID2023-152876OB-I00.}}
\thanks{P. Chanfreut and M. Heemels are with the Department of Mechanical Engineering, Eindhoven University of Technology, The Netherlands. E-mails:
        {\tt\small p.chanfreut.palacio@tue.nl, m.heemels@tue.nl}.}%
\thanks{J. M. Maestre is with the Department of Systems and Automation Engineering, University of Seville, Spain. E-mail:
        {\tt\small pepemaestre@us.es}.}
        \thanks{Q. Zhu is with the Department of Electrical and Computer Engineering, New York University, USA. E-mail:
        {\tt\small qz494@nyu.edu}.}%
}
\begin{document}
\maketitle


\begin{abstract}
In this paper, we introduce a nonlinear distributed model predictive control (DMPC) algorithm, which allows for dissimilar and time-varying control horizons among agents, thereby addressing a common limitation in current DMPC schemes. We consider  cooperative agents with varying  computational capabilities and operational objectives, each willing to manage varying numbers of optimization variables at each time step. Recursive feasibility and a non-increasing evolution of the optimal cost are proven for the proposed  algorithm. Through numerical simulations on systems with three agents, we show that our approach effectively approximates the performance of traditional DMPC, while reducing the number of variables to be optimized. This advancement paves the way for a more decentralized yet coordinated control strategy in various applications, including power systems and traffic~management.
\end{abstract}


%

\section{Introduction}

Distributed model predictive control (DMPC) has gained considerable attention to address the inherent challenges posed by the control of large-scale systems and overcome the limitations of centralized and decentralized model predictive control (MPC) architectures~\cite{negenborn2014distributed}. Centralized MPC employs a single system-wide controller to achieve optimal global performance, but presents significant scalability issues and lacks redundancy. In contrast, decentralized MPC spreads decision-making across multiple agents, each managing a subsystem. The decentralized performance, however, degrades as the coupling between subsystems becomes stronger, as it lacks inter-agent coordination~\cite{rawlings2008coordinating}. DMPC adopts a similar agent-based decomposition and solves this issue at the expense of increased communication and computational~complexity.

  
A closer look at DMPC shows that each agent typically manages a subset of control inputs, while recurrently sharing information with its peers to achieve a certain level of global coordination~\cite{scattolini2009architectures}. As reviewed by~\cite{negenborn2014distributed}, a wide range of DMPC approaches can be found in the literature with significant differences on their basic assumptions, e.g., on the type of coordination  mechanism employed, the source of coupling between agents (dynamics, objectives, or constraints), their attitude (cooperative or non-cooperative), and the need for performing iterative computations. In this regard, some comparative works have pointed out a clear trade-off between communication burden and performance~\cite{alvarado2011comparative,maestre2015comparison}, which has led to the development of clustering-based MPC strategies where agents can merge and switch between decentralized and distributed MPC strategies to save coordination efforts with minimal effects on performance~\cite{CHANFREUT2021_survey}.

In this work, we deepen into the previously mentioned trade-off in a different manner. In particular, a common assumption across DMPC schemes is that all agents use identical control horizons, which is somehow limiting for heterogeneous systems~\cite{bao2022recent}. For example, it is not difficult to imagine agents with different computational capacities, hence being able to handle different numbers of optimization variables and problems complexities.  Likewise, they may possess varying levels of reactivity and proactivity in their decisions, which may not align with those of their peers. Moreover, the agents dynamics and objectives themselves may require different prediction horizons, e.g., in traffic freeways and irrigation canals agents manage segments of variable lengths with the corresponding transport delays~\cite{fele2014coalitional, chanfreut2020coalitional}. In all these situations, it is desirable to have a DMPC scheme that can accommodate unequal, and potentially time-varying, horizons according to their computational capacities and yet enjoy the benefits of coordinated control actions. 

Based on the above, we propose a novel nonlinear DMPC algorithm inspired in~\cite{stewart2011cooperative}, which is in turn based on the linear DMPC proposed by~\cite{venkat2004plant}. Numerical results on a system with three agents are presented, showcasing certain combinations of unequal control horizons can closely approximate the performance achieved with equal longer horizons. The latter comes with the added benefits of reduced computational and communication overheads, thereby also approaching the decentralized simplicity. Note that the proposed DMPC  partially shares the  underlying goal of~\cite{ma2020event} and~\cite{wang2023event}, which present  event-triggered linear DMPC methods with variable prediction horizons  to reduce the complexity of the optimization problem. We contribute to this line of research focusing on nonlinear cooperative agents with a different MPC problem formulation.


The remainder of this paper is organized as follows. Section~\ref{sec:system_description} formulates the problem. Section~\ref{sec:varying} presents the DMPC formulation with dissimilar and time-varying control horizons. Section~\ref{sec:simulations} includes our simulation results. Finally, concluding remarks are given in Section~\ref{sec:conclusions}.

\textit{Notation:} The set of non-negative natural numbers is $\mathbb{N}=\{0,1,2, \hdots\}$, whereas $\mathbb{N}_{\geq a}=\{a,a+1,a+2, \hdots\}$ for any ${a\in \mathbb{N}}$. Also, given two scalars $a,b\in \mathbb N$ with~${b>a}$, we define set $\mathbb{N}_{[a,b]}=\{a, a+1, ..., b\}$. In addition, $[v_i]_{i\in \mathbb{N}_{[a,b]}}$ denotes the column vector $[v_a^\top, v_{a+1}^\top, ...., v_{b}^\top]^\top$ for any variable $v$, and  the notation $v(t|k)$ indicates the predicted value of~$v$ made at instant~$k$ for time instant $k+t$.   Finally,~$\otimes$ denotes the Cartesian product.

\section{Problem setting}\label{sec:system_description}

This section introduces the system dynamics, presents the distributed control architecture, and describes the goals considered throughout this article. 

 \subsection{System description}
Consider a class of systems that can be partitioned into a set $\cm{N}=\{1,2,...,N\}$ of coupled subsystems with dynamics
\begin{equation}
\begin{split}
&x_{i}(k+1) = f_{i}(x(k), u(k)), \\
    \end{split}
    \label{eq:subsystems_model}
\end{equation}
\noindent where $x_i(k)\in \mathbb R^{n_i}$ and $u_i(k)\in \mathbb R^{m_i}$ denote, respectively, the state and input of subsystem $i\in \cm{N}$ at time instant~$k\in\mathbb N$, and $x(k)=[x_i(k)]_{i\in\cm{N}}$ and $u(k)=[u_i(k)]_{i\in\cm{N}}$ are the global state and input vectors. In addition,~$f_{i}: \mathbb R^{n} \times \mathbb R^{m} \rightarrow \mathbb R^{n_i}$ is a possibly nonlinear function for all~${i\in\cm{N}}$, with $n=\sum_{i\in\cm{N}} n_i$ and $m=\sum_{i\in \cm{N}} m_i$. Note also that, considering~\eqref{eq:subsystems_model} for all~$i\in \cm{N}$, the global system dynamics can be modeled~as
    \begin{equation}
    x(k+1)=f(x(k), u(k)),
   \label{eq:global_model}
\end{equation}
 where $f$ aggregates functions $f_{i}$ for all~{$i\in \cm{N}$}, i.e., $f=(f_1, f_2, \hdots, f_N)$.




As for the constraints, we require the state and input of every subsystem $i\in \cm{N}$ to satisfy
\begin{equation}\label{eq:constraints_sets}
    x_i(k) \in \mathcal{X}_i, \ \   u_i(k) \in \mathcal{U}_i, \ \ k\in \mathbb N,
\end{equation}
where $\mathcal{X}_i$ and $\mathcal{U}_i$ are compact convex sets that contain the origin in their interior. 

Finally, let us define the stage performance cost of each subsystem $i\in \cm{N}$ at time $k\in \mathbb N$ as~$\ell_i(x_i(k), u_i(k))$. Similarly, the global stage cost will be given by  $\ell(x(k), u(k)) = \sum_{i\in\cm{N}} \ell_i(x_i(k), u_i(k))$.

\subsection{Multi-agent control architecture}
In what follows, consider that the set of subsystems in $\cm{N}$ is managed by a set of \emph{cooperative} MPC agents. Specifically, each subsystem $i \in \cm{N}$ is assigned to local MPC agent $i$, which determines control input~$u_i(k)$ at every time instant~${k\in\mathbb N}$. 

The local decisions will be cooperatively negotiated to optimize the global performance following an algorithm inspired by~\cite{stewart2011cooperative}. To this end, the set of agents are allowed to share data through a communication network  as illustrated in Fig.~\ref{fig:distributed_system}. For simplicity, let us assume that the topology of this network is modeled by fully connected graph $\cm{G}=(\cm{N},\cm{E})$, where $\cm{E}=\{(i,j) : i, j \in \cm{N}, i\neq j\}$, and let us introduce the following assumption:
\vspace{4pt}
\begin{assumption}
    At every step $k$, all agents know the global system state $x(k)$.
\end{assumption}
\noindent Finally, notice that the assumption above is also considered in~\cite{stewart2011cooperative}, and can be relaxed by following~\cite{razzanelli2017parsimonious}, allowing agents to operate with partial knowledge of the system state. However, the latter has been left out of the scope of this paper and will be considered in future works.

\begin{figure}[t]
    \centering
    \includegraphics[scale=0.55,trim={5.9cm 6.7cm 8.5cm 6.5cm},clip]{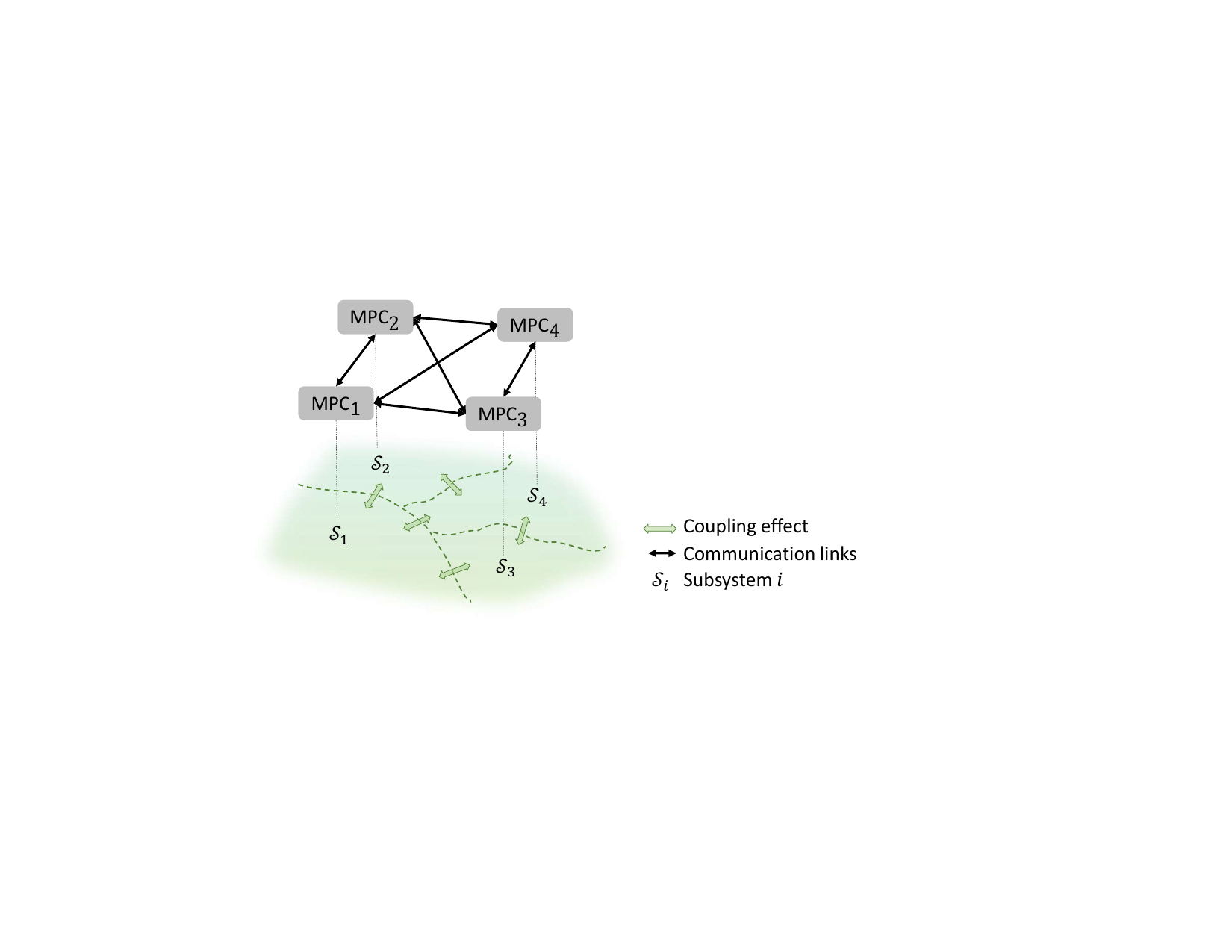}
    \vspace{-0.75cm}
    \caption{Distributed system with 4 subsystems managed by MPC agents.  }
    \label{fig:distributed_system}
\end{figure}

\subsection{Main goal}

The underlying goal of this article is to develop a cooperative nonlinear DMPC, where the agents  manage  \textit{efficiently} their computation resources while approximately minimizing the following global cost:
\begin{equation}\label{eq:cost_infinity}
       \sum\limits_{k=0}^{\infty}  \sum_{i \in \cm{N}} \ell_i(x_i(k), u_i(k)). 
\end{equation}
\textit{Efficiency} will be measured considering the time required by the agents to solve their MPC problems and the resulting global performance. We will exploit the role of the control horizons, allowing them to be adjusted unequally and dynamically. Note that the control horizon determines the number of optimization variables handled by each of the~agents.

\section{Nonlinear DMPC with dissimilar control horizons}\label{sec:varying}

This section presents the proposed DMPC and its properties. For its use as a reference, let us first introduce the following \emph{centralized} MPC problem for state $x(k)$ and time~$k\in\mathbb N$:
\begin{subequations}\label{eq:Cent_MPC}
\begin{align}
 &\hspace{-25pt} \min_{[\mathbf{u}_i(k)]_{i\in\cm{N}}} \ \   J(x(k), [\mathbf{u}_i(k)]_{i\in\cm{N}}) \nonumber\\
\nc{s.t.} \ 
   &x(0|k) = x(k),\label{eq:ini_state} \\[3pt]
\begin{split}
    & x(t+1|k) = f(x(t|k), u(t|k)),  \ \forall t\in \mathbb{N}_{[0,N_\nc{p}-1]}, 
\end{split} \label{eq:pred_model}\\[3pt]
&   x_i(t|k) \in \mathcal{X}_i, \ \  \forall t\in \mathbb{N}_{[0,N_\nc{p}-1]}, \ \label{eq:state_constraints} \\[3pt]
&   u_i(t|k) \in \mathcal{U}_i, \ \  \forall t\in \mathbb{N}_{[0,N_\nc{p}-1]}, \  \\[3pt]
&  x(N_\nc{p}|k) \in \mathcal{X}_\nc{f}, \label{eq:term_constraint_global} \\[3pt]
& \forall i \in \cm{N}. \nonumber
\end{align}
\end{subequations}
\noindent Here, the objective function is given by 
\vspace{4pt}
\begin{equation}
    \begin{split} J&(x(k), [\mathbf{u}_i(k)]_{i\in\cm{N}}) =  \\ &\sum_{i \in \cm{N}}   \sum\limits_{t=0}^{N_\nc{p}-1}  \ell_i(x_i(t|k), u_i(t|k))   + V_\nc{f}(x(N_\nc{p}|k)),\end{split}
\end{equation}
where  $\n{u}_i(k)= [u_i^\top(0|k), \ u_i^\top(1|k),  \ \hdots, \ u_i^\top(N_\nc{p}\!-\!1|k)]^\top$ 
for all $i\in \cm{N}$, and $N_\nc{p}$ is the prediction horizon. 
Also, $\mathcal{X}_\nc{f}$ and $V_\nc{f}:\mathbb R^n \rightarrow \mathbb R_{\geq 0}$ represent respectively a  terminal set and terminal cost designed considering the following assumption:  

\vspace{4pt}
\begin{assumption}\label{as:terminal_components}
There exist a state-dependent control law $\kappa(x)=[\kappa_{i}(x)]_{i\in\cm{N}}$, a function $V_\nc{f}:\mathbb R^n \rightarrow \mathbb R_{\geq 0}$, and a set $\cm{X}_\nc{f} = \{x \in \mathbb R^n \ : \ V_\nc{f}(x) \leq \alpha, \ x\in \cm{X}, \ Kx \in \cm{U}\}$ such that for all $x  \in \cm{X}_\nc{f}$,
\begin{equation}\label{eq:cond_term_comp}
\begin{split}
    &V_\nc{f}(f(x, \kappa(x))) + \ell(x,\kappa(x)) - V_\nc{f}(x) \leq 0, 
\end{split}
\end{equation}
where $\alpha \in \mathbb R$, $\cm{X} = \otimes_{i\in\cm{N}} \cm{X}_i$, and $\cm{U} = \otimes_{i\in\cm{N}} \cm{U}_i$.
 \end{assumption}

\subsection{Control algorithm }

This article explores a nonlinear DMPC inspired by the cooperation-based approach presented in~\cite{venkat2008distributed,stewart2011cooperative}. In this setting, all agents optimize through an iterative procedure plant-wide performance function $J(\cdot)$, thus taking into account the global effect of the local decisions. To enumerate the iterations, let us use in what follows subscript $p \in \mathbb N$, i.e.,~$\mathbf{u}_i^p(k)$ represents the value of $\mathbf{u}_i(k)$ at iteration $p$.

The pseudocode of the distributed strategy proposed in this article is given in Algorithm~\ref{alg_1}. As can be seen, at every time step, the agents perform an iterative negotiation, where every iteration comprises the following steps:\footnote{For the sake of clarity, we omit time step~$k$ in the rest of this subsection when it can be clearly inferred from the context.}
\begin{itemize}
    \item[(i)] All agents $i \in \cm{N}$ solve (in parallel) nonlinear MPC problems~\eqref{eq:Dist_MPC}. By doing so, they obtain local input sequence $\n{u}_i^\ast$, together with scaling factors~$\lambda_{i}^\ast$, which are introduced to guarantee recursive feasibility.  Notice that  only the local inputs associated with agent~$i$ are optimization variables in~\eqref{eq:Dist_MPC}, whereas those of all~$j\neq i$ are fixed to their values at iteration $p-1$.
    \item[(ii)] The agents solutions are optimally combined (if necessary) by a supervisory entity using weights $[\gamma_i]_{i\in\cm{N}}$ as will be detailed below (see~\eqref{eq:coord_problem}). 
\end{itemize}

\begin{algorithm}[t]
\caption{}\label{alg_1}
Let $k$ be the current time instant, and define an initial feasible input  sequence~$\n{u}_i^0$ (see Remark~\ref{rem:initial_u}), and a control horizon~$N_{\nc{c}, i}$, for all $i\in \cm{N}$.   Then, starting from $p=1$, proceed as follows:
\begin{algorithmic}[1]
\STATE All MPC agents $i \in \cm{N}$ compute (in parallel) $\mathbf{u}_i^{\ast}$ by solving:
\begin{subequations}\label{eq:Dist_MPC}
\begin{align}
[\mathbf{u}_i^\ast, \ &\lambda_i^\ast] =   \arg  \min_{\mathbf{u}_i, \lambda_i} \  J(x(k),  \mathbf{u}_i, [\mathbf{u}_j]_{j\in\cm{N}_{-i}}) \qquad \qquad \ \ \nonumber\\
\nc{s.t.} \  
&  \eqref{eq:ini_state} \text{ to } \eqref{eq:term_constraint_global}, \nonumber \\[3pt]
   \begin{split}
    &u_i(t|k) = \lambda_{i}\kappa_i(x(t|k)) + (1\!-\!\lambda_{i}) u_i^{p-1}(t|k), \\   &\forall t\in \mathbb{N}_{[N_{\nc{c},i},N_\nc{p}-1]},
\end{split} \label{eq:ui_from_Nci}\\[3pt] 
& \mathbf{u}_j =  \mathbf{u}_j^{p-1}, \ \forall j \in \cm{N}_{-i}, \\[3pt]
& \lambda_{i} \in [0,1],
\end{align}
\end{subequations}
\noindent where $\cm{N}_{-i} = \cm{N} \setminus \{i\}$. As for the objective function above, note that we can rewrite $
    J(x(k), [\mathbf{u}_i]_{i\in\cm{N}}) = J(x(k), \n{u}_i, [\n{u}_j]_{j \in \cm{N}_{-i}})$.
\STATE All MPC agents $i \in \cm{N}$ share $\mathbf{u}_i^{\ast}$ with the supervisor.
\STATE The supervisor defines weights $[\gamma_{i}^\ast]_{i\in\cm{N}}$ as follows:
\begin{subequations}\label{eq:coord_problem}
\begin{align}
[\gamma_{i}^\ast]_{i\in\cm{N}}& =  \arg \min_{[\gamma_{i}]_{i\in\cm{N}}} \  J(x(k),  [\mathbf{u}_i]_{i\in\cm{N}}) \nonumber \qquad \qquad \qquad  \\[4pt]
\nc{s.t.} \ &\eqref{eq:ini_state} \text{ to } \eqref{eq:term_constraint_global}, \nonumber \\
&   \n{u}_i = \gamma_{i}\mathbf{u}_i^{\ast} + (1- \gamma_{i})\n{u}_{i}^{p-1}, \\[3pt] 
&  \gamma_{i} \in [0,1], \\[3pt] 
& \forall i \in \cm{N}. 
\end{align}
\end{subequations}
\STATE All MPC agents $i \in \cm{N}$ define their input sequence for iteration $p$ as $
    \n{u}_i^{p} = \gamma_{i}^\ast \mathbf{u}_i^{\ast} + (1-\gamma_{i}^\ast) \n{u}_{i}^{p-1}$, and share it with the other agents.
\STATE Set $p=p+1$ and go back to Step 1 until convergence (or a maximum number of iterations) is reached.
\STATE All agents implement inputs $u_i^{\bar{p}}(0|k)$, where $\bar{p}$ denotes the last iteration index. 
\end{algorithmic}
\end{algorithm}

 Regarding (i),  note that the control horizons used to solve problem~\eqref{eq:Dist_MPC} can differ among the set of agents. In particular, for any agent $i\in \cm{N}$, its control horizon is denoted as~$N_{\nc{c},i}$ and is such that $N_{\nc{c},i} \leq N_\nc{p}$.  From instant~$N_{\nc{c},i}$ of the prediction, the local inputs are defined as a linear combination of the previously obtained solution and a given predefined control law (see~\eqref{eq:ui_from_Nci}). For simplicity, we have considered control law $\kappa_i$ (recall~Assumption~\ref{as:terminal_components}), but notice that a different approach can also be used, e.g., the inputs may be kept constant from instant $N_{\nc{c},i}$. 
 

 Regarding (ii), let us remark that the resulting combination of solutions $\n{u}_i^\ast$ for all $i\in \cm{N}$ may not be globally feasible. To illustrate this, consider problem~\eqref{eq:Dist_MPC} with a simplified scenario involving only two agents. In this context, we have that $(\n{u}_1^\ast, \n{u}_2^{p-1})$ and $(\n{u}_1^{p-1}, \n{u}_2^\ast)$ constitute solutions satisfying the constraints of~\eqref{eq:Cent_MPC} at every iteration~$p$. However, the same cannot be guaranteed for the solution~$(\n{u}_1^\ast, \n{u}_2^\ast)$. This is the reason motivating the introduction of  weights~$[\gamma_i]_{i\in\cm{N}}$, which allow steering the inputs towards those of iteration $p-1$ when it is globally advantageous.  

In addition, let us mention that, although Algorithm~\ref{alg_1} is inspired by~\cite{stewart2011cooperative}, there are significant differences to highlight. Specifically, both the formulation of problem~\eqref{eq:Dist_MPC} and the final definition of sequences $[\n{u}_i^p]_{i\in\cm{N}}$ at every iteration $p$ differ from those in~\cite{stewart2011cooperative}. The main differences are as follows. Firstly, the agents not only consider input constraints but also account for state constraints (see~\eqref{eq:state_constraints} and~\eqref{eq:term_constraint_global}). Secondly, we introduce constraint~\eqref{eq:ui_from_Nci}, which is formulated not to compromise recursive feasibility while facilitating the reduction of the number of optimization variables. Thirdly, the set of agents may work with dissimilar (and possibly dynamic) horizons~$N_{\nc{c},i}$. Lastly, the final sequences $[\n{u}_i^p]_{i\in\cm{N}}$ are determined as an optimal combination of new and previous solutions by solving~\eqref{eq:coord_problem}. 



\begin{remark}\label{rem:initial_u}
    By construction of Algorithm~\ref{alg_1}, input sequences~$\n{u}_i^{p}(k)$  for $i\in \cm{N}$ constitute a feasible solution of problem~\eqref{eq:Cent_MPC} at any time instant $k \in \mathbb N$ and iteration~$p\in \mathbb N$. Therefore, given the definition of terminal set~$\cm{X}_\nc{f}$ (recall Assumption~\ref{as:terminal_components}), we can build a candidate solution for any agent $i$ at instant~${k+1}$~as
    \begin{equation}\label{eq:init_u}
       \tilde{\mathbf{u}}_i(k+1)=  \begin{bmatrix}
           [u_i^{\bar{p}(k)}(t|k)]_{t=1}^{N_\nc{p}-1} \\[3pt] \kappa_i(x^{\bar{p}(k)}(N_\nc{p}|k))
        \end{bmatrix} .
    \end{equation}
    This will be used to define $\mathbf{u}_i^0(k+1)$ at every time instant~$k$.
\end{remark}

  \begin{remark}
  Problem~\eqref{eq:coord_problem} and step 4 of Algorithm~\ref{alg_1} could be omitted without compromising its essence and the properties indicated in the next subsection. In particular,  note that, after solving~\eqref{eq:Dist_MPC}, each agent $i$ \emph{proposes} global solution $(\n{u}_i^\ast, [\n{u}_j^{p-1}]_{j\in \cm{N}_{-i}})$, which is feasible by definition of the constraints in~\eqref{eq:Dist_MPC}.  Following the game-theoretic notion of dominant-strategies, the solution providing greater global benefits may also be chosen. That is, the proposal \emph{dominating} the rest could be selected as the input sequences of iteration~$p$.  Nevertheless, solving~\eqref{eq:coord_problem} allows for greater reductions of the cost function along the iterations. 
   \end{remark}


\vspace{6pt}
\subsection{Properties}

This subsection proves recursive feasibility and a non-increasing evolution of the cost function for the proposed DMPC.  In the following theorems, we define sequences~$[\n{u}_i^\cdot(k)]_{i\in\cm{N}}$ as \emph{globally feasible} at instant $k$,  if they satisfy the constraints of~\eqref{eq:Cent_MPC}. That is, the inputs and resulting states belong in the corresponding constraints sets (see~\eqref{eq:constraints_sets}), and the terminal state is in the terminal set. 
\vspace{4pt}
\begin{thm}[Recursive feasibility]
    If the initial sequences $[\n{u}_i^0(0)]_{i\in\cm{N}}$ are globally feasible, then problems~\eqref{eq:Dist_MPC} for all agents $i\in\cm{N}$ and problem~\eqref{eq:coord_problem} will be recursively feasible. 
\end{thm}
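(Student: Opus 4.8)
The plan is to prove the statement by a double induction: an outer induction over the time index~$k$ and, within each time step, an inner induction over the iteration index~$p$. The invariant I would propagate is that the iterate $[\n{u}_i^p(k)]_{i\in\cm{N}}$ is \emph{globally feasible}, i.e.\ satisfies~\eqref{eq:ini_state}--\eqref{eq:term_constraint_global}. The base case at $k=0$, $p=0$ is precisely the hypothesis that $[\n{u}_i^0(0)]_{i\in\cm{N}}$ is globally feasible.

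For the inner step, I would assume that $[\n{u}_i^{p-1}]_{i\in\cm{N}}$ is globally feasible and show that each problem~\eqref{eq:Dist_MPC} admits a feasible point, namely $(\n{u}_i, \lambda_i) = (\n{u}_i^{p-1}, 0)$. Indeed, with $\lambda_i=0$ the constraint~\eqref{eq:ui_from_Nci} collapses to $u_i(t|k)=u_i^{p-1}(t|k)$, so this candidate reproduces the entire previous global input $[\n{u}_j^{p-1}]_{j\in\cm{N}}$ (recall that $\n{u}_j = \n{u}_j^{p-1}$ is imposed for $j\in\cm{N}_{-i}$), which satisfies~\eqref{eq:ini_state}--\eqref{eq:term_constraint_global} by the induction hypothesis; hence a minimizer $(\n{u}_i^\ast, \lambda_i^\ast)$ exists. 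The identical argument gives feasibility of the coordination problem~\eqref{eq:coord_problem}: taking $\gamma_i = 0$ for all $i$ yields $\n{u}_i = \n{u}_i^{p-1}$, again globally feasible. Finally, since every feasible point of~\eqref{eq:coord_problem} satisfies~\eqref{eq:ini_state}--\eqref{eq:term_constraint_global} by construction, the resulting $[\n{u}_i^p]_{i\in\cm{N}} = [\gamma_i^\ast \n{u}_i^\ast + (1-\gamma_i^\ast)\n{u}_i^{p-1}]_{i\in\cm{N}}$ is itself globally feasible. This closes the inner induction and preserves the invariant across iterations.

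For the outer step I would invoke the candidate sequence $\tilde{\n{u}}_i(k+1)$ of Remark~\ref{rem:initial_u}. Starting from the last globally feasible iterate $[\n{u}_i^{\bar p(k)}(k)]_{i\in\cm{N}}$, its shifted part reproduces the tail $x^{\bar p(k)}(t|k)$, $t=1,\dots,N_\nc{p}$, so every state and input constraint over $t\in\mathbb{N}_{[0,N_\nc{p}-1]}$ is inherited from time~$k$; in particular $x(N_\nc{p}-1|k+1) = x^{\bar p(k)}(N_\nc{p}|k)\in\cm{X}_\nc{f}$. Applying the terminal law $\kappa$ at this state keeps the new terminal state inside $\cm{X}_\nc{f}$ and the appended input inside $\cm{U}$, by the invariance of $\cm{X}_\nc{f}$ afforded by Assumption~\ref{as:terminal_components}. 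Setting $\n{u}_i^0(k+1) = \tilde{\n{u}}_i(k+1)$ therefore re-establishes the outer invariant at $k+1$, after which the inner induction applies again.

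I expect the delicate point to be the outer step, specifically justifying that the appended terminal move keeps the terminal state in $\cm{X}_\nc{f}$. The descent inequality~\eqref{eq:cond_term_comp} only directly bounds $V_\nc{f}$; to conclude $f(x,\kappa(x))\in\cm{X}_\nc{f}$ I must combine it with the nonnegativity of the stage cost to get $V_\nc{f}(f(x,\kappa(x)))\leq V_\nc{f}(x)\leq\alpha$ and then, using the level-set definition of $\cm{X}_\nc{f}$ in Assumption~\ref{as:terminal_components}, argue that the state- and input-membership conditions also hold, i.e.\ that $\cm{X}_\nc{f}$ is positively invariant under $\kappa$. The per-agent and coordination feasibility arguments, by contrast, are essentially immediate once one identifies the $\lambda_i=0$ (resp.\ $\gamma_i=0$) fallback point, which is the structural reason the dissimilar horizons~$N_{\nc{c},i}$ do not jeopardize feasibility.
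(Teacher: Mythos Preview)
Your proposal is correct and follows essentially the same route as the paper: an inner induction over iterations using the $\lambda_i=0$ (for~\eqref{eq:Dist_MPC}) and $\gamma_i=0$ (for~\eqref{eq:coord_problem}) fallback points, followed by an outer induction over time via the shifted-and-appended sequence~\eqref{eq:init_u}. You are in fact slightly more explicit than the paper in two places---you verify that the coordination output $[\n{u}_i^p]_{i\in\cm{N}}$ is itself globally feasible (closing the inner loop), and you flag that positive invariance of $\cm{X}_\nc{f}$ under $\kappa$ is what makes the outer step go through---whereas the paper simply appeals to Remark~\ref{rem:initial_u} and Assumption~\ref{as:terminal_components} without unpacking them.
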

\begin{proof}
    To prove this theorem, let us first consider any time instant $k\in \mathbb N$ and two consecutive iterations, say $p-1$ and~$p$. Then, 
    note that if sequences $[\n{u}_i^{p-1}(k)]_{i\in\cm{N}}$ are globally feasible, the following holds at iteration $p$:
    \begin{itemize}
        \item[i.] For all~$i\in \cm{N}$,  $\n{u}_i^{p-1}(k)$ provides a feasible solution of~\eqref{eq:Dist_MPC}. That is, the agents can always keep the  input sequence of the previous iteration.  Particularly, note that if we set $\lambda_{i}=0$, then $u_i(t|k)=u_i^{p-1}(t|k)$ for all~${t\in [N_{\nc{c},i}, N_{\nc{p}}-1]}$. 
        \item[ii.] By construction of problem~\eqref{eq:coord_problem},  a feasible solution can be built simply by setting $\gamma_i= 0$ for all $i\in\cm{N}$. Note that this translates into~${\n{u}_i^p(k) = \n{u}_i^{p-1}(k)}$. 
    \end{itemize}  
    \noindent As a consequence, if~$\n{u}_i^0(k)$ is globally feasible, $\n{u}_i^{1}(k)$ will also be globally feasible. By induction, we have that for any $p\in \mathbb N_{\geq 1}$ both problems~\eqref{eq:Dist_MPC} and~\eqref{eq:coord_problem} will have feasible solutions at time instant $k$. 

    We are left to prove that, at any $k\in \mathbb N_{\geq 1}$, it is possible to find  globally feasible input sequences for iteration 0. To prove this,  consider~\eqref{eq:init_u} and note that the latter can be obtained by setting $\n{u}_i^{0}(k+1)=\tilde{\mathbf{u}}_i(k+1)$ for all $i\in \cm{N}$. This concludes the proof. 
\end{proof}

\begin{thm}[Non-increasing objective function]
    The value of cost function $J(x(k),[\n{u}_i^p(k)]_{i\in\cm{N}})$ is non-increasing with respect to both time $k$ and iterations index~$p$. 
\end{thm}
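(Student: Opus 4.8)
The plan is to split the statement into two separate monotonicity claims and prove each in turn: first that $J$ is non-increasing along the iteration index $p$ at a fixed time $k$, and then that it is non-increasing across the transition from $k$ to $k+1$. Chaining the two orderings (iterations within a step, then across steps) yields the global non-increasing behaviour of $J(x(k),[\n{u}_i^p(k)]_{i\in\cm{N}})$.

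For the monotonicity in $p$, the argument is essentially immediate from the coordination step~\eqref{eq:coord_problem}. At iteration $p$, the sequences $[\n{u}_i^{p-1}(k)]_{i\in\cm{N}}$ are recovered by the feasible choice $\gamma_i=0$ for all $i\in\cm{N}$, since this gives $\n{u}_i^p(k)=\n{u}_i^{p-1}(k)$. Because $[\gamma_i^\ast]_{i\in\cm{N}}$ minimizes $J$ over the feasible set of~\eqref{eq:coord_problem}, the optimal value cannot exceed the cost attained at $\gamma_i=0$, so $J(x(k),[\n{u}_i^p(k)]_{i\in\cm{N}}) \leq J(x(k),[\n{u}_i^{p-1}(k)]_{i\in\cm{N}})$. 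Iterating this inequality produces the non-increasing chain over all iterations performed at time $k$.

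For the monotonicity in $k$, I would invoke the standard terminal-ingredients (telescoping) argument applied to the shifted candidate of Remark~\ref{rem:initial_u}. Let $\bar{p}(k)$ denote the last iteration at time $k$, set $\n{u}_i^0(k+1)=\tilde{\n{u}}_i(k+1)$ as in~\eqref{eq:init_u}, and use the realized transition $x(k+1)=f(x(k),u^{\bar{p}(k)}(0|k))$ to note that the predicted trajectory at $k+1$ coincides with the one-step shift of the trajectory at $k$, i.e.\ $x(t|k+1)=x^{\bar{p}(k)}(t+1|k)$ for $t\in\mathbb{N}_{[0,N_\nc{p}-1]}$, with $\bar{x}:=x^{\bar{p}(k)}(N_\nc{p}|k)\in\cm{X}_\nc{f}$. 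Expanding $J(x(k+1),[\tilde{\n{u}}_i(k+1)]_{i\in\cm{N}})$ and subtracting $J(x(k),[\n{u}_i^{\bar{p}(k)}(k)]_{i\in\cm{N}})$, all overlapping stage-cost terms cancel and what remains is
\begin{equation*}
-\ell(x(k),u^{\bar{p}(k)}(0|k)) + \ell(\bar{x},\kappa(\bar{x})) + V_\nc{f}(f(\bar{x},\kappa(\bar{x}))) - V_\nc{f}(\bar{x}).
\end{equation*}
Since $\bar{x}\in\cm{X}_\nc{f}$, Assumption~\ref{as:terminal_components} through condition~\eqref{eq:cond_term_comp} bounds the last three terms by zero, and, using non-negativity of the stage cost, the whole expression is $\leq -\ell(x(k),u^{\bar{p}(k)}(0|k))\leq 0$. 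Hence $J(x(k+1),[\n{u}_i^0(k+1)]_{i\in\cm{N}}) \leq J(x(k),[\n{u}_i^{\bar{p}(k)}(k)]_{i\in\cm{N}})$, so the cost at the start of the next step does not exceed the cost at the end of the current one.

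Combining the two parts shows that the cost decreases (weakly) first along $p$ within each time step and then across the $k\to k+1$ transition, which establishes the claim. The easy part is the monotonicity in $p$, which follows directly from the structure of~\eqref{eq:coord_problem}. The main obstacle will be the time-transition step: I must verify that the shifted candidate is admissible (already guaranteed by the recursive feasibility theorem) and, crucially, that appending the terminal move $\kappa(\bar{x})$ triggers exactly the telescoping cancellation so that the terminal decrease inequality~\eqref{eq:cond_term_comp} can be invoked at $\bar{x}\in\cm{X}_\nc{f}$. I should also make explicit the standing non-negativity of the stage cost $\ell$, without which the residual term $-\ell(x(k),u^{\bar{p}(k)}(0|k))$ could not be discarded.
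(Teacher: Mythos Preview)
Your proposal is correct and follows essentially the same approach as the paper: first the $p$-monotonicity from the feasibility of $\gamma_i=0$ in~\eqref{eq:coord_problem}, then the $k$-monotonicity from the shifted candidate~\eqref{eq:init_u} combined with the terminal decrease~\eqref{eq:cond_term_comp}, and finally the chaining of the two. The paper's proof is terser---it simply invokes ``the properties of the terminal components'' where you spell out the telescoping cancellation---but the underlying argument is identical; your observation that non-negativity of $\ell$ is needed to discard the residual $-\ell(x(k),u^{\bar p(k)}(0|k))$ is a useful explicit hypothesis that the paper leaves implicit.
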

\begin{proof}
    Consider an agent $i\in\cm{N}$ and note that, by optimality of $[\n{u}_{i}^{\ast}(k)]_{i\in\cm{N}}$ and construction of problem~\eqref{eq:coord_problem}, we have that
    \begin{equation}
        J(x(k),[\n{u}_{i}^{p}(k)]_{i\in\cm{N}}) \leq J(x(k),[\n{u}_{i}^{p-1}(k)]_{i\in\cm{N}}).
    \end{equation}
    In particular, by setting $\gamma_i=0$ for all $i \in \cm{N}$, the previous inequality holds as an equality. 
    In addition, considering initialization~\eqref{eq:init_u}, together with the properties of the terminal components, we have that:
    \begin{equation*}
       J(x(k+1), [\n{u}_{i}^{0}(k+1)]_{i\in\cm{N}}) \leq  J(x(k),[\n{u}_{i}^{\bar{p}(k)}(k)]_{i\in\cm{N}}).
    \end{equation*}
    Therefore, 
     \begin{equation}
     \begin{split}
                J(x(k+1), &[\n{u}_{i}^{\bar{p}(k+1)}(k+1)]_{i\in\cm{N}}) \\ &\leq        J(x(k+1), [\n{u}_{i}^{0}(k+1)]_{i\in\cm{N}}) \\
                &\leq  J(x(k),[\n{u}_{i}^{\bar{p}(k)}(k)]_{i\in\cm{N}}).
     \end{split}
    \end{equation}
    That is, the value of the global objective function after the iterative negotiation does not increase over time. 
\end{proof}

Note that the properties above depend on the existence of input sequences satisfying the constraints of \eqref{eq:Cent_MPC} only at initial time instant 0. Besides, these initial sequences do not have to consider any predefined control law as later introduced in~\eqref{eq:Dist_MPC}. Finally, notice that these proofs are not straightforwardly derived from the results in~\cite{stewart2011cooperative} due to the differences introduced in the proposed Algorithm~\ref{alg_1}.

\vspace{10pt}



\subsection{Extension to dynamic control horizons}\label{sec:varying_Nc}
This subsection introduces a heuristic approach to allow the agents to reduce their control horizons in~\eqref{eq:Dist_MPC} over time, and thus their number of optimization variables, in case they are not providing significant performance benefits. Note that the solution of~\eqref{eq:Dist_MPC} at any iteration has the following~form:
\begin{equation*}
    \n{u}_i^\ast= \begin{bmatrix}
    u_i^\ast(0|k) \\  \vdots \\   u_i^\ast(N_{\nc{c},i}-1|k) \\
        \lambda_{i}^\ast \kappa_i(x(N_{\nc{c},i}|k)) + (1-\lambda_{i}^\ast) u_i^{p-1}(N_{\nc{c},i}|k) \\ \vdots \\
        \lambda_{i}^\ast \kappa_i(x(N_\nc{p}-1|k)) + (1-\lambda_{i}^\ast) u_i^{p-1}(N_{\nc{p}}-1|k)
        \end{bmatrix}\!.
\end{equation*}
Considering this, we can build a slightly different input sequence, say $\hat{\n{u}}_i$, where control law $\kappa_i(\cdot)$ starts being considered one instant before, that is, 
\begin{equation*}
    \hat{\n{u}}_i= \begin{bmatrix}
    u_i^\ast(0|k) \\  \vdots \\   u_i^\ast(N_{\nc{c},i}-2|k) \\  
        \lambda_{i}^\ast \kappa_i(x(N_{\nc{c},i}-1|k)) + (1-\lambda_{i}^\ast) u_i^{p-1}(N_{\nc{c},i}-1|k) \\ \vdots \\
        \lambda_{i}^\ast \kappa_i(x(N_\nc{p}-1|k)) + (1-\lambda_{i}^\ast) u_i^{p-1}(N_{\nc{p}}-1|k)
        \end{bmatrix}\!.
\end{equation*}
If $\hat{\n{u}}_i$ is feasible, then it provides a upper bound on the objective function value of~\eqref{eq:Dist_MPC} when using control horizon~${N_{\nc{c},i}-1}$. Using this new candidate sequence~$\hat{\n{u}}_i$, we establish that if
\begin{equation*}
   J(x(k), \hat{\n{u}}_i, [\n{u}_j^{p-1}]_{j \in \cm{N}_{-i}}) -  J(x(k), \n{u}_i^\ast, [\n{u}_j^{p-1}]_{j \in \cm{N}_{-i}})   \leq \epsilon,
\end{equation*}
agent $i$ can reduce its control horizon by one unit, i.e., $N_{\nc{c},i} \leftarrow N_{\nc{c},i}-1$. Above, $\epsilon \in \mathbb R_{\geq 0}$ is a tuning parameter that can be chosen arbitrarily close to~0.  This heuristic approach will be implemented along the iterations  starting from a given horizon, say $N_{\nc{c},i}^0$. Note that, given the introduction of scalars~$\lambda_i$ and the definition of problem~\eqref{eq:coord_problem}, the properties introduced in the previous subsection are not compromised by dynamic changes of the control horizons. 

Finally, let us also remark that, instead of building a candidate input sequence, learning methods are also useful to optimize when and how to update~$N_{\nc{c},i}$ for all agents $i\in \cm{N}$. 

\section{Simulation results}\label{sec:simulations}

In this section, we apply the DMPC scheme with varying control horizons to a system with three masses connected by springs and dampers~\cite{liu2014distributed, li2013robust}. The system dynamics are given by 
\begin{equation}\label{eq:system_simulations} \resizebox{0.97\columnwidth}{!}{%
    $\begin{aligned} 
        &m_1 \begin{bmatrix}
            \dot{r}_1\\ \dot{v}_1
        \end{bmatrix}\!\!=\!\!\begin{bmatrix}
            m_1 v_1 \\ u_1-k_0 r_1 e^{-r_1}  - h_\nc{d}v_1-k_\nc{c} (r_1-r_2)
        \end{bmatrix}\!, \\
        &m_2 \begin{bmatrix}
            \dot{r}_2\\ \dot{v}_2
        \end{bmatrix}\!\!=\!\!\begin{bmatrix}
            m_2 v_2 \\ u_2\!-\!k_0 r_2 e^{-r_2}\!-\!h_\nc{d}v_2\!-\!k_\nc{c}(r_2\!-\!r_1)-k_\nc{c}(r_2\!-\!r_3)
        \end{bmatrix}\!, \ \ \\ 
        &m_3 \begin{bmatrix}
            \dot{r}_3\\ \dot{v}_3
        \end{bmatrix}\!\!=\!\!\begin{bmatrix}
            m_3 v_3 \\ u_3 -k_0 r_3 e^{-r_3}-h_\nc{d}v_3-k_\nc{c} (r_3-r_2)
        \end{bmatrix}\!, 
    \end{aligned}$ }
\end{equation}
\noindent where $r_i$, $v_i$ and $u_i$ denote respectively the position, velocity and input of subsystem $i\in \{1,2,3\}$. Likewise, {$k_0=1.1$ N/m}, $k_\nc{c}=0.25$ N/m, and $h_\nc{d}=0.30$~Ns/m are the springs stiffnesses and damping coefficients, and the masses are given by $m_1=1.5$ kg, $m_2=2$ kg, and $m_3=1$ kg. System~\eqref{eq:system_simulations} has been discretized using a sample time of $0.15$~s.  In addition, the controller's  objective function is defined by weighting matrices $Q=\text{diag}(2, \ 0.05, \ 2, \ 0.05, \ 2, \ 0.05)$ and $R= \text{diag}(0.1, \ 1, \ 0.1)$,  and the constraint sets are 
\begin{equation}
    \begin{split}
        &\mathcal{U}_i= \{u_i : -1.5 \leq u_i \leq 1.5\}, \\
        &\mathcal{X}_i= \left\lbrace\begin{bmatrix}
            r_i \\ v_i
        \end{bmatrix} : \begin{bmatrix}
            -5 \\ -2
        \end{bmatrix} \leq \begin{bmatrix}
            r_i \\ v_i
        \end{bmatrix} \leq \begin{bmatrix}
            5 \\ 2
        \end{bmatrix}\right\rbrace,
    \end{split}
\end{equation}
for $i \in \{1,2,3\}$. To design the terminal components, we have considered that they take the form $\kappa(x)=Kx$ and $V_\nc{f}(x) = x^\top P x$, being $K$ and $P$ matrices designed to satisfy~\eqref{eq:cond_term_comp}.\footnote{For the design of gain $K$ and  matrix~$P$ we have used Matlab\textsuperscript{\textregistered} \texttt{LMI Control Toolbox},  and for solving the MPC problems the solver \texttt{ipopt}. All simulations have been carried out in a~2.3~GHz 11th Gen Intel\textsuperscript{\textregistered}\! Core\textsuperscript{TM}\!  i7/16 GB RAM computer.}

We have first tested Algorithm~\ref{alg_1} with different combinations of $N_{\nc{c},2}$ and $N_{\nc{c},3}$, with~$N_{\nc{c},2}, N_{\nc{c},3}\in\{8,12,...,24\}$,  while fixing $N_{\nc{c},1}=10$ and defining $N_\nc{p}=\max_i N_{\nc{c},i}$. Subsequently, we have implemented the time-varying approach described in Section~\ref{sec:varying_Nc} using $\epsilon=5\cdot 10^{-6}$. Our simulations assess the different settings in terms of performance and computation time as illustrated below.

\begin{figure}[t]
    \centering
    \includegraphics[scale=0.52,trim={2.5cm 8.5cm 1cm 8.5cm},clip]{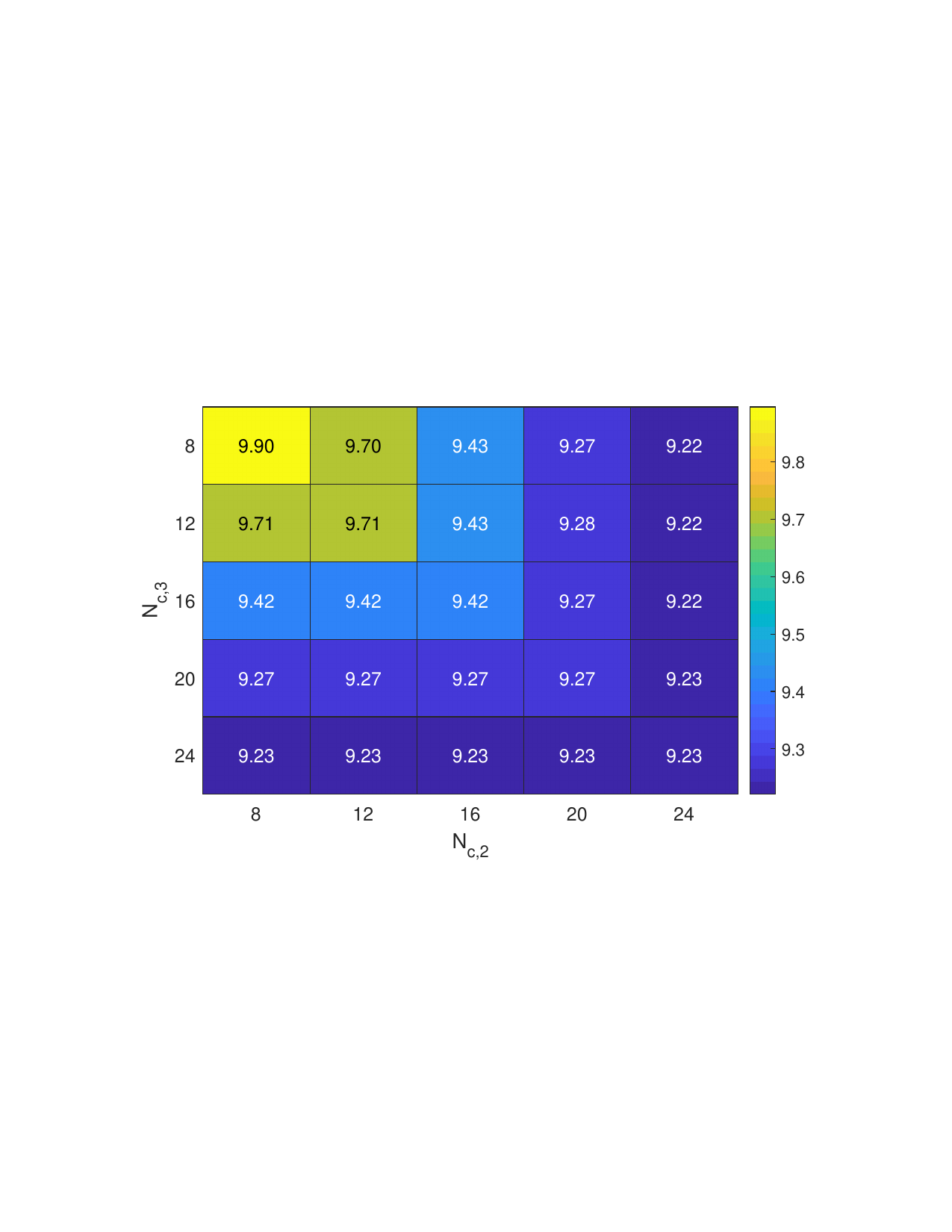}
    \vspace{-0.6cm}
    \caption{Heat-map with the values of index $J_\nc{cc}$ for different combinations of~$N_{\nc{c},2}$ and $N_{\nc{c},3}$.  }
    \label{fig:Jcc_3agents}
\end{figure}

\begin{figure}[t]
    \centering
    \includegraphics[scale=0.52,trim={2.5cm 9.5cm 1cm 9.5cm},clip]{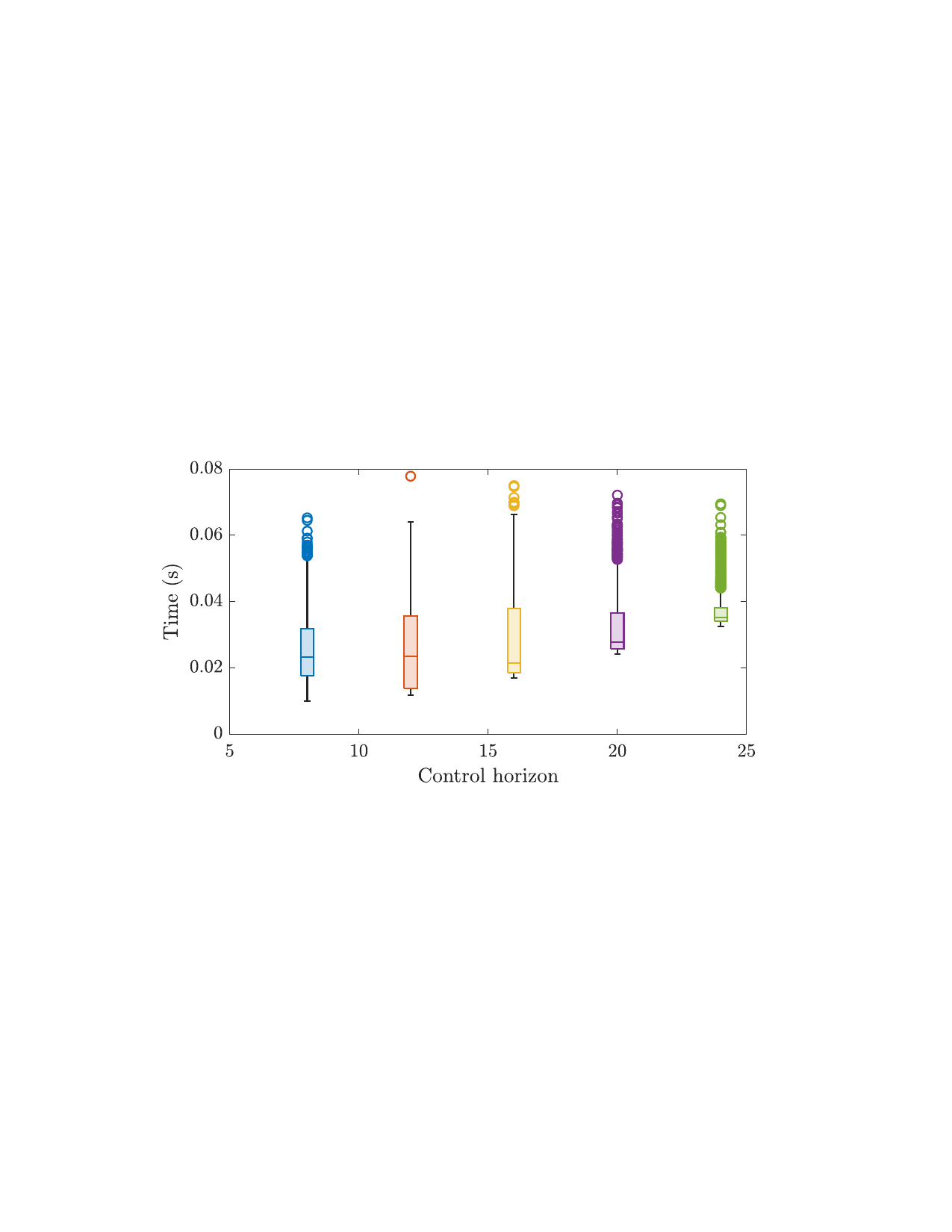}
    \vspace{-0.9cm}
    \caption{Box chart with the time required to solve problem~\eqref{eq:Dist_MPC} with different values of horizon~$N_{\nc{c},i}$. }
    \label{fig:CT_3agents}
\end{figure}

As for the system performance, Fig.~\ref{fig:Jcc_3agents} shows a heat-map with the values of index
\begin{equation}
    J_\nc{cc} = \sum_{k=0}^{T_\nc{sim}} \ell(x(k),u(k)),
\end{equation}
where $T_\nc{sim}$ denotes the number of simulated time steps, for the mentioned combinations of $N_{\nc{c},2}$ and $N_{\nc{c},3}$. As expected, increasing $N_{\nc{c},i}$ brings performance benefits. Minor discrepancies may also stem from both the convergence tolerance defined in Algorithm~\ref{alg_1} and the solver not providing the exact minimizers of the optimization problems. Nevertheless, it can be seen that there are combinations of $N_{\nc{c},2}$ and $N_{\nc{c},3}$ that provide superior performance than others while involving the same number of variables. For instance, if we set  $N_{\nc{c},2}=8$ and $N_{\nc{c},3}=16$, the cost decreases in 0.29 units in comparison with the case in which $N_{\nc{c},2}=N_{\nc{c},3}=12$. Similarly, $N_{\nc{c},2}=8$ and $N_{\nc{c},3}=20$ emulates the performance of $N_{\nc{c},2}=N_{\nc{c},3}=20$.  Likewise, it can be seen that greater relative benefits are obtained when increasing $N_{\nc{c},i}$ from 8 to~12, or 12 to 16, than when going from 20 to 24. 

As for the computation times, Fig.~\ref{fig:CT_3agents} provides a box chart with the time spent in solving nonlinear problem~\eqref{eq:Dist_MPC} with different values of $N_{\nc{c},i}$. This figure has been obtained considering all simulations and iterations performed. Additionally, as an example, Fig.~\ref{fig:xu_2agents} shows the state and input evolution over time for the case in which $N_{\nc{c},2}=8$ and $N_{\nc{c},3}=20$.

\begin{figure}[t]
    \centering
    \includegraphics[scale=0.52,trim={2.5cm 8.5cm 1cm 8.7cm},clip]{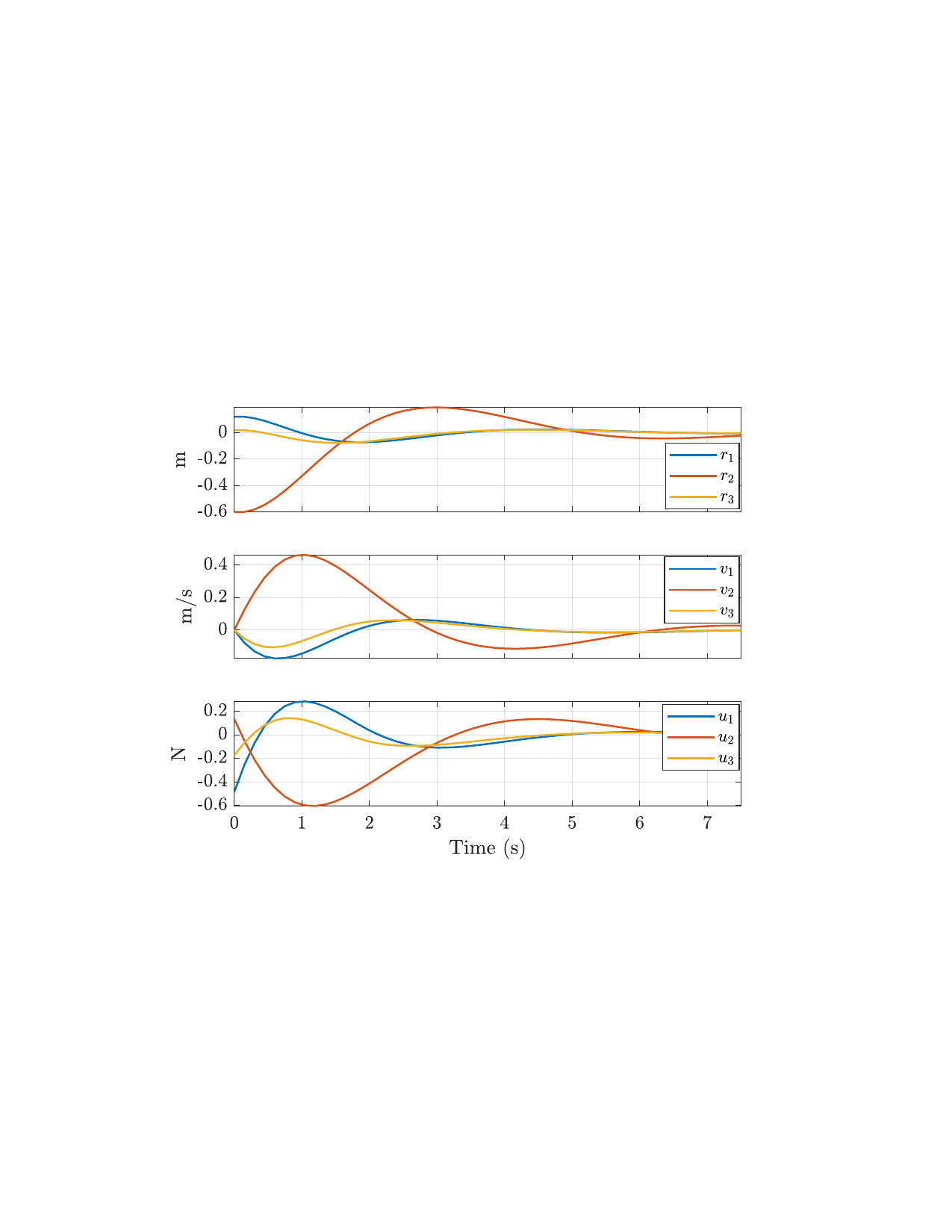}
    \vspace{-0.6cm}
    \caption{System state and input over time for $N_{\nc{c},2}=8$ and $N_{\nc{c},3}=20$.  }
    \label{fig:xu_2agents}
\end{figure}

\begin{figure}[t]
    \centering
    \includegraphics[scale=0.52,trim={2.5cm 10cm 1cm 10.7cm},clip]{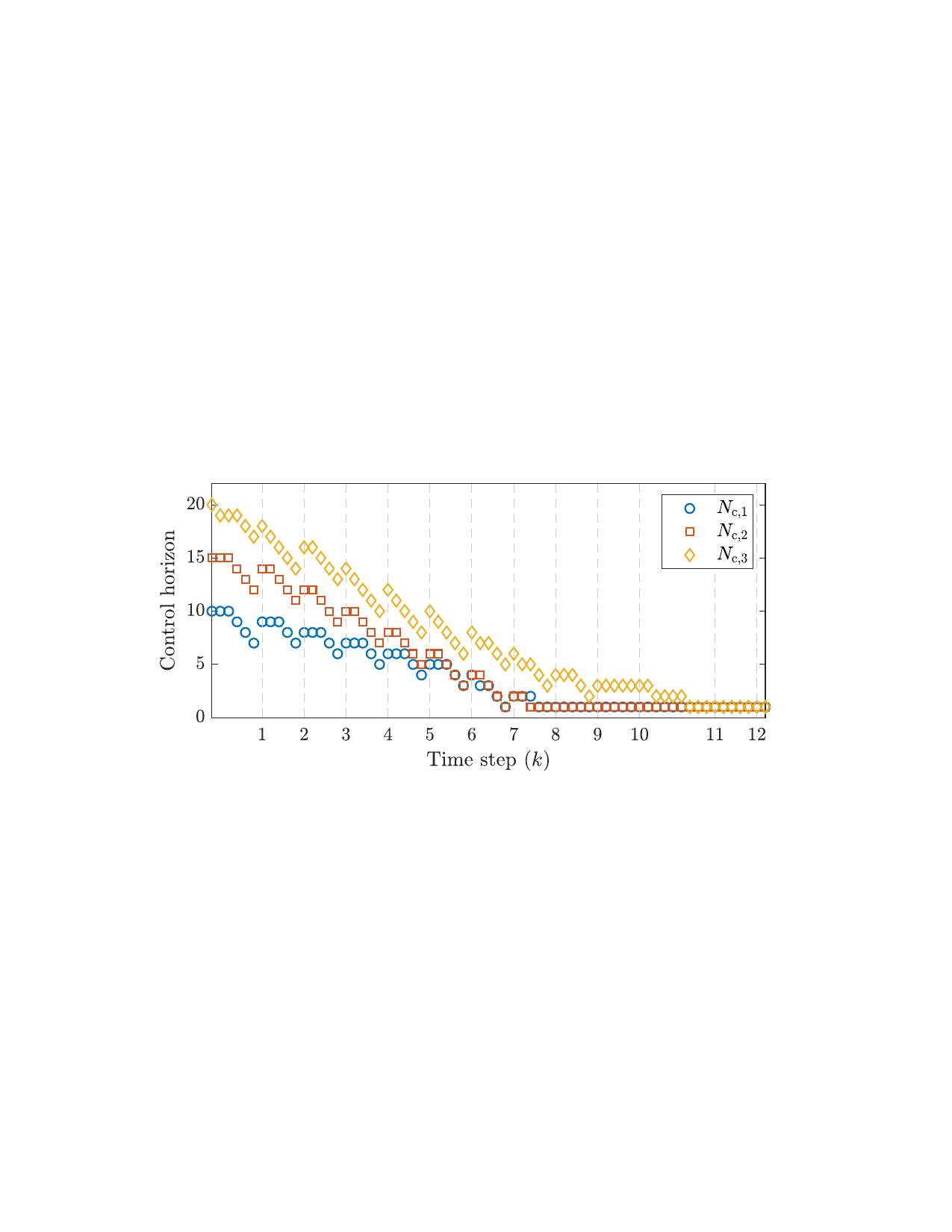}
    \vspace{-0.8cm}
    \caption{Evolution of $N_{\nc{c},i}$ for $i\in \{1,2,3\}$ over the iterations implemented in the first 12 time steps.  }
    \label{fig:Nc_var_masses}
\end{figure}


Finally, Fig.~\ref{fig:Nc_var_masses} illustrates the evolution of $N_{\nc{c},i}$ over the iterations performed at the initial 12 time steps when simulating the approach described in Section~\ref{sec:varying_Nc}. Specifically, at each time step and for $i\in \{1,2,3\}$, $N_{\nc{c}, i}^0$ is set to its mean during the iterations conducted at the preceding time step, starting from the values shown in Fig.~\ref{fig:Nc_var_masses} at time instant~0. As can be seen, the agents gradually reduce their control horizons and consequently the number of optimization variables. The grey vertical lines indicate time step changes, while each marker corresponds to an iteration for each of the agents. Note that the uneven spaces between these lines are due to the potential differences in the number of iterations performed at each time step. As a reference, the resulting performance cost in this simulation was 9.254.

\section{Conclusions}\label{sec:conclusions}

In this work, we have introduced a nonlinear cooperative DMPC approach to accommodate different control horizons for sets of agents. Our analysis and simulations, particularly focusing on a 3-agent system, have validated the feasibility of this approach and also highlighted its  potential. In particular, we have shown that dissimilar  control horizons can reduce the number of optimization variables while emulating the global performance of equal longer horizons. This can be beneficial for systems with limited computational resources and/or where the cost of computation is a critical factor.

Our simulations also point out that there is a phenomenon of diminishing returns as additional decision variables are negotiated. While this is not surprising because each extra variable corresponds to a \emph{farther} time within the prediction horizon, it suggests that beyond a certain point, the incremental benefits in performance gained from increasing coordination can be outweighed by the additional computational burden it brings. This can open up new possibilities for optimizing the DMPC framework, especially in terms of dynamically adjusting the level of coordination to achieve an optimal balance between performance and resources utilization. 

Our current efforts are indeed aligned with this idea, for we plan is to develop a DMPC framework that not only adapts to the varying capabilities and requirements of individual agents but also dynamically modulates the level of cooperation among them. 




\bibliography{IEEEexample.bib}


%








\bibliographystyle{IEEEtran}
\end{document}